\documentclass[12pt,preprint,a4paper,prd,superscriptaddress,nofootinbib,aps]{revtex4-2}
\usepackage[utf8]{inputenc} 

\usepackage{mathtools,amsmath,amssymb,amsthm,physics}

\usepackage[pdftex,dvipsnames,hyperref]{xcolor} 
\usepackage[pdftex,pdfencoding=auto,psdextra,colorlinks=true,allcolors = blue]{hyperref}

\hypersetup{pageanchor=false}

\definecolor{CUDred}{RGB}{255,75,0}  
\definecolor{CUDgreen}{RGB}{3,175,122} 
\definecolor{CUDblue}{RGB}{0,90,255}
\definecolor{SolBase3}{RGB}{253,246,227}
\definecolor{Base00}{RGB}{101,123,131}
\definecolor{Base01}{RGB}{ 88,110,117}
\definecolor{Base02}{RGB}{  7, 54, 66}

\hypersetup{
  colorlinks        = true,
  bookmarksopen     = true,
  bookmarksnumbered = true,
  allcolors         = CUDred
}
\usepackage[]{pagecolor}

\usepackage[T1]{fontenc}
\usepackage{mlmodern}
\renewcommand{\frac}[2]{\genfrac{}{}{0.6pt}{}{#1}{#2}}
\newcommand{\Kozaki}{%
  \author{Hiroshi~\surname{Kozaki}}
  \email{kozaki@ishikawa-nct.ac.jp}
  \affiliation{Department of General Education, National Institute of Technology, Ishikawa College, Ishikawa 929-0392, Japan}
}

\newcommand{\Koike}{%
  \author{Tatsuhiko~Koike} 
  \email{koike@phys.keio.ac.jp}
  \affiliation{Department of Physics, Keio University, Yokohama 223-8522, Japan} 
}

\newcommand{\Ishihara}{%
  \author{Hideki~Ishihara}
  \email{h.ishihara@omu.ac.jp}
  \affiliation{Nambu Yoichiro Institute of Theoretical and Experimental Physics, Osaka Metropolitan University, Osaka 558-8585, Japan}
  \affiliation{Osaka Central Advanced Mathematical Institute, Osaka Metropolitan University, Osaka 558-8585, Japan}
}

\newcommand{\Morisawa}{%
  \author{Yoshiyuki~ Morisawa}
  \email{morisawa@omu.ac.jp}
  \affiliation{Osaka Central Advanced Mathematical Institute, Osaka Metropolitan University, Osaka 558-8585, Japan}
}

\newcommand{\Matsuno}{%
  \author{Satsuki~Matsuno}
  \email{smatsuno43@gmail.com}
  \affiliation{Osaka Central Advanced Mathematical Institute, Osaka Metropolitan University, Osaka 558-8585, Japan}
}

\newcommand{\TensSty}[1]{\boldsymbol{#1}}
\newcommand{\X}{\TensSty{X}}
\newcommand{\Y}{\TensSty{Y}}

\newcommand{\Met}{\boldsymbol{g}}

\newcommand{\PhiTens}{\TensSty{\varphi}}
\newcommand{\XiVect}{\TensSty{\xi}}
\newcommand{\EtaForm}{\TensSty{\eta}}
\newcommand{\ACStruct}{(\PhiTens,\XiVect,\EtaForm)}
\newcommand{\IdTens}{\TensSty{1}}

\newcommand{\CompatMet}{\TensSty{h}}
\newcommand{\ACMStruct}{(\PhiTens,\XiVect,\EtaForm,\CompatMet)}
\newcommand{\PhiForm}{\TensSty{\Phi}}

\newcommand{\OrthoVect}[1]{\TensSty{e}_{#1}}
\newcommand{\DualForm}[1]{\TensSty{\theta}^{#1}}

\newcommand{\NijenTens}[2]{\comm{#1}{#2}}
\newcommand{\LieBrac}[2]{\comm{#1}{#2}}
\newcommand{\LieDer}[1]{\mathcal{L}_{#1}}

\newcommand{\CovDer}[1]{\nabla_{#1}}

\newcommand{\TrCovDerXi}{\tr(\CovDer{} \XiVect)}
\newcommand{\TrPhiCovDerXi}{\tr(\PhiTens \CovDer{} \XiVect)}
\newcommand{\RicTens}{\text{\textbf{Ric}}}
\newcommand{\RicScal}{R}
\newcommand{\EmptSlot}{\;\cdot\;}
\newcommand{\TwoMet}{\TensSty{\gamma}}

\newcommand{\TwoRic}{\RicScal_{2}}
\newcommand{\TmpForm}{\TensSty{\zeta}}

\newcommand{\EinTens}{\TensSty{G}}

\newcommand{\EneMom}{\TensSty{T}}

\DeclareMathOperator{\diag}{diag}
\numberwithin{equation}{section}

\newcommand{\EtaEin}{$\eta$-Einstein}

\newcommand{\Mani}{\mathcal{M}}
\newtheorem{Prop}[]{Proposition}
\newtheorem{Theo}[]{Theorem}
\newcommand{\GaussCurv}{K_2}

\newcommand{\ConstCurvTwoDim}{\varepsilon}
\newcommand{\ConstCurvThreeDim}{k}
\newcommand{\vol}[1]{\textbf{vol}_{#1}}
\newcommand{\GNCI}{\chi}
\newcommand{\GNCII}{\sigma}

\newcommand{\OneForm}{one-form}
\newcommand{\TwoForm}{two-form}

\begin{document}
\title{Geometric deformations of symmetric spacetimes with a string cloud}

\Kozaki

\Matsuno

\Koike

\Morisawa

\Ishihara

\preprint{OCU-PHYS-626}
\preprint{AP-GR-212}
\preprint{NITEP-277}

\begin{abstract}
  We establish a deformation framework for highly symmetric solutions to the Einstein
  equations.
  In this framework, four-dimensional metrics are constructed from three-dimensional
  {\EtaEin} metrics admitting a deformation determined by a single function.
  Under this deformation, the resulting spacetime solves the Einstein equations with a
  string-cloud source. 
  Within this framework, a wide range of symmetric spacetimes can be treated in a unified
  manner. 
  These include FLRW, Kantowski-Sachs, and LRS Bianchi cosmological models
  (including Taub-NUT-(A)dS solutions), 
  as well as Reissner-Nordström-(A)dS black holes admitting spherical, planar, or
  hyperbolic symmetry.
  In the cosmological setting, the deformation leaves the evolution equations for
  the scale factors unchanged, and hence the expansion history coincides with that of the
  corresponding undeformed models. 
  For the deformed Reissner-Nordström-(A)dS black holes, 
  the structure of Killing horizons is insensitive to the deformation. 
\end{abstract}

\maketitle{}

\section{Introduction}
\label{sec:introduction}

Exact solutions of the Einstein equations have played a central role in our understanding
of gravitational phenomena, 
providing concrete models for black holes and cosmological spacetimes
\cite{ExactSolutions,ExactSpacetimes}.  
Well known exact solutions, however, rely heavily on geometric symmetry assumptions and are 
often restricted to vacuum or highly idealized matter models.
While such assumptions greatly simplify the field equations,
they also limit our understanding of how spacetime geometry responds to relaxations of symmetry,
and what types of matter content may be required when deformations are introduced.

From a geometric viewpoint,
it is therefore natural to ask how symmetric spacetimes can be systematically deformed
while remaining solutions to the Einstein equations. 
In particular, one may ask which geometric structures allow such deformations to be
carried out in a controlled manner, and what forms of energy-momentum tensors inevitably 
emerge as a consequence of the deformations. 
Similar questions about the construction of new solutions of the Einstein equations
through deformations have been explored from various perspectives in the literature;
see, e.g., \cite{Vacaru2011DeformSol,Molina2013DeformSol,LasHeras:2019DeformSol}
and references therein. 

Deformations of highly symmetric spacetimes generally break isotropy and therefore require matter 
sources with anisotropic stress to satisfy the Einstein equations. 
From this perspective, phenomenological matter models with anisotropic stress provide a natural and 
effective description of the energy-momentum
tensors required by the deformations.
Among these models, a cloud of strings offers a particularly simple and tractable example,
characterized by stress supported along a single direction \cite{Letelier1979stringcloud}.
Such string cloud models have been studied in a variety of gravitational settings;
see, e.g., \cite{Letelier1983StringCosmology, Nune2025BHSC, Priyokumar2019CosmoSC,
  Singh2020SCAltTheo} and references therein.  
In this work, we employ string clouds as an effective matter model capable of accommodating
the anisotropic stress required by the geometric deformations.

In this paper, we construct a class of spacetime metrics utilizing three-dimensional {\EtaEin}
metrics,
which generalize Einstein metrics by allowing a distinguished direction while
retaining a particularly simple form of the Ricci tensor. 
The resulting spacetime metrics are assumed to accommodate a simple string cloud,
characterized by anisotropic stress supported along a single preferred direction.
From a geometric viewpoint, these metrics can be regarded as deformations of highly symmetric
spacetimes.
Moreover, given a highly symmetric solution of the Einstein equations, possibly including
conventional matter sources such as perfect fluids or electromagnetic fields,
our framework allows one to generate a corresponding solution describing a deformed spacetime,
with the entire effect of the deformation absorbed into the appearance of the string cloud as an 
additional matter component.
Importantly, key features of the original symmetric solution are preserved,
in the sense that certain metric functions, such as scale factors, remain unchanged.

Such a deformation, in which key features of the original symmetric solution are 
preserved while the matter sector is modified,
has been found in the context of black holes supported by string clouds. 
In particular, Boos and Frolov have constructed an exact black hole solution by distorting the 
spherical symmetry of the Reissner-Nordström-(A)dS black hole, effectively replacing the round 
two-sphere with a more general closed two-dimensional geometry, where the deformation is supported 
solely by the introduction of a string cloud \cite{Boos2018BHString}. 
The spacetime construction employed in this paper extends our previous constructions
\cite{Ishihara2022InhomESU,Kozaki2024nullcontact}.
The framework developed in the present work naturally incorporates these special cases and
places them within a unified geometric setting. 
More generally, our framework allows deformations of a broad class of symmetric solutions of
the Einstein equations, including well-known cases such as Friedmann-Lemaître-Robertson-Walker
(FLRW) spacetimes and Taub-NUT-(A)dS spacetimes.

The paper is organized as follows.
In Section~\ref{sec:almost-cont-struct}, we briefly review almost contact metric structures 
and then construct three-dimensional {\EtaEin} metrics. 
In Section~\ref{sec:spac-constr-with},
we use the {\EtaEin} metrics to build spacetimes that admit a simple string cloud and are 
deformations of highly symmetric solutions. 
The energy-momentum tensor of the string cloud is derived from the Nambu-Goto action,
and the Einstein tensors are computed.
Section~\ref{sec:deform-symm-solut} is devoted to the main result,
where we prove that a broad class of highly symmetric spacetimes can be systematically
deformed into exact solutions of the Einstein equations coupled to a string cloud. 
In Section~\ref{sec:examples},
we apply the result to cosmological models and to Reissner-Nordström-(A)dS black hole
spacetimes, 
and discuss the physical implications of the resulting deformed solutions. 
Finally, Section~\ref{sec:conclusion} contains our conclusions.

\section{Almost contact structure and {\EtaEin} metric}
\label{sec:almost-cont-struct}

\subsection{Almost contact structure}
An almost contact (AC) structure is defined on a $(2n+1)$-dimensional manifold.
An AC structure is characterized by a triple $\ACStruct$,
where $\PhiTens$ is a $(1,1)$ tensor field, $\XiVect$ is a vector field, and $\EtaForm$ is
a {\OneForm} satisfying 
\begin{equation}
  \PhiTens^2 = - \IdTens + \XiVect \otimes \EtaForm, \qquad \EtaForm(\XiVect) = 1.  
\end{equation}
It follows that 
\begin{equation}
  \PhiTens \XiVect = 0, \qquad \EtaForm \circ \PhiTens = 0, \qquad \rank\PhiTens = 2n. 
\end{equation}

\subsubsection{Rank}
Let $\dd{\EtaForm}^s$ denote the wedge product of $s$ exterior derivatives of $\EtaForm$, 
\begin{equation}
  \dd{\EtaForm}^s \coloneq \underbrace{\dd{\EtaForm} \wedge \dots \wedge
    \dd{\EtaForm}}_{\text{$s$ factors}}. 
\end{equation}
The rank $r$ of the AC structure $\ACStruct$ is defined as \cite{Olszak1986NACM3D}
\begin{equation}
  r =
  \begin{cases}
    2s     &\qif \dd{\EtaForm}^s \neq 0 \qand \EtaForm \wedge \dd{\EtaForm}^s = 0, \\
    2s + 1 &\qif \EtaForm \wedge \dd{\EtaForm}^s \neq 0  \qand \dd{\EtaForm}^{s+1}= 0. \\
  \end{cases}
\end{equation}

\subsubsection{Normality}
Let $\NijenTens{\PhiTens}{\PhiTens}$ denote the Nijenhuis tensor of $\PhiTens$,
which is defined by 
\begin{equation}
  \NijenTens{\PhiTens}{\PhiTens}(\X, \Y)
  \coloneq
  \PhiTens^2 \LieBrac{\X}{\Y} + \LieBrac{\PhiTens\X}{\PhiTens\Y}
  - \PhiTens \LieBrac{\PhiTens\X}{\Y} - \PhiTens \LieBrac{\X}{\PhiTens \Y}, 
\end{equation}
for any vector fields $\X, \Y$. 
The AC structure is said to be normal if it satisfies%
\footnote{In this paper, the exterior derivative is defined so that 
\begin{equation}
  \dd{\EtaForm}(\X, \Y) = \X \EtaForm(\Y) - \Y \EtaForm(\X) - \EtaForm(\LieBrac{\X}{\Y}).
\end{equation}}
\begin{equation}
  \NijenTens{\PhiTens}{\PhiTens} + \dd{\EtaForm}\otimes \XiVect = 0. 
\end{equation}

A feature of the normality is that $\XiVect$ preserves $\EtaForm$ and $\PhiTens$, i.e., 
$\LieDer{\XiVect} \EtaForm = 0, \quad \LieDer{\XiVect} \PhiTens = 0.$

\subsection{Almost contact metric structure\label{subsec:ACM}}
A Riemannian metric $\CompatMet$ is said to be compatible with the AC structure $\ACStruct$ 
if it satisfies
\begin{equation}
  \CompatMet(\PhiTens\X, \PhiTens\Y) = \CompatMet(\X, \Y) - \EtaForm(\X) \EtaForm(\Y), 
\end{equation}
for any vector fields $\X, \Y$. 
For a compatible metric, the {\OneForm} $\EtaForm$ is the metric dual of the vector field 
$\XiVect$, i.e., 
\begin{equation}
  \label{eq:MetDualEtaXi}
  \CompatMet(\XiVect, \EmptSlot) = \EtaForm. 
\end{equation}
An AC structure $\ACStruct$ together with a compatible metric $\CompatMet$ is called the almost
contact metric (ACM) structure, and is denoted by $\ACMStruct$. 

In $(2n+1)$ dimensions,
one can construct a local orthonormal frame
$\qty{\OrthoVect{1},\OrthoVect{2},\dots,\OrthoVect{2n+1}}$ such that  
\begin{equation}
  \OrthoVect{1} = \XiVect,
  \quad 
  \PhiTens \OrthoVect{2k} = - \OrthoVect{2k+1},
  \quad \PhiTens \OrthoVect{2k+1} = \OrthoVect{2k}, \quad (k = 1, \dots, n). 
\end{equation}
This frame is referred to as the $\varphi$-basis. 
We take the coframe $\qty{\DualForm{1}, \DualForm{2},\dots, \DualForm{2n+1}}$
with $\DualForm{1} = \EtaForm. $ 
It is shown that, in three dimensions, any orthonormal frame with $\OrthoVect{1} =
\XiVect$ is a $\varphi$-basis up to orientation, that is, up to changing $\OrthoVect{2}$
and $\OrthoVect{3}$.

From an ACM structure $\ACMStruct$, a {\TwoForm} $\PhiForm$ is defined by 
\begin{equation}
  \PhiForm(\X, \Y) \coloneq \CompatMet(\X, \PhiTens \Y),
\end{equation}
for any vector fields $\X, \Y$. 
Using this {\TwoForm}, the volume form $\vol{\CompatMet}$ is given by
\begin{equation}
  \label{eq:VolForm}
  \vol{\CompatMet} = \frac{1}{n!} \EtaForm \wedge \PhiForm^n. 
\end{equation}
Indeed, in a $\varphi$-basis, $\PhiForm$ is written as
\begin{equation}
  \PhiForm = \DualForm{2} \wedge \DualForm{3}
  + \dots + \DualForm{2n} \wedge \DualForm{2n+1}. 
\end{equation}

The {\TwoForm} $\PhiForm$ is used to define distinguished ACM structures,
such as contact metric structures, Sasakian structures, and quasi-Sasakian structures. 

\subsubsection{Contact metric structure}

An ACM structure $\ACMStruct$ is called a contact metric structure if it holds that
\begin{equation}
  \PhiForm = \frac{1}{2} \dd{\EtaForm}. 
\end{equation}
A contact metric structure such that $\XiVect$ is a Killing vector field is called a K-contact
structure.

\subsubsection{Sasakian structure}

A normal ACM structure $\ACMStruct$ is called Sasakian if it is contact. 
From Eq.~\eqref{eq:VolForm}, we have
$\EtaForm \wedge \dd{\EtaForm}^{n} = n! \, \vol{\CompatMet} \neq 0$. 
Thus, the Sasakian structure has the maximum rank. 

The Sasakian structure ensures various curvature properties.
A notable one is that, in three dimensions, the Ricci tensor takes a simple form as
\begin{equation}
  \RicTens = f_1 \CompatMet + f_2 \EtaForm \otimes \EtaForm,
\end{equation}
where $f_1, f_2$ are functions. 
In this paper, we will refer to the metrics that yield Ricci tensors of this form as
{\EtaEin}. 

\subsubsection{Quasi-Sasakian structure}

A normal ACM structure is called a quasi-Sasakian structure if it satisfies 
\begin{equation}
  \dd{\PhiForm} = 0.
\end{equation}
The rank of the quasi-Sasakian structure is not necessarily maximum. 
A remarkable property of the quasi-Sasakian structure is that $\XiVect$ is a Killing
vector field \cite{Blair1967QuasiSasaki}. 

\subsection{Three-dimensional {\EtaEin} metrics on normal ACM structures}
In this subsection, we review some results on three-dimensional normal ACM (NACM)
structures from \cite{Olszak1986NACM3D} and write down {\EtaEin} metrics explicitly. 

For a NACM structure in three dimensions, the following equation holds
\begin{equation}
  \CovDer{\X} \XiVect = - \frac{\TrCovDerXi}{2} \PhiTens^2 \X - \frac{\TrPhiCovDerXi}{2} \PhiTens \X, 
\end{equation}
where $\CovDer{}$ denotes the Levi-Civita connection, and
$\TrCovDerXi$ and $\TrPhiCovDerXi$ are the traces of the maps $\X \mapsto \CovDer{\X} \XiVect$ and 
$\X \mapsto \PhiTens \CovDer{\X} \XiVect$.
In terms of a $\varphi$-basis,
we have
\begin{equation}
  \CovDer{i} \xi_j = \CompatMet(\CovDer{\OrthoVect{i}}\XiVect, \OrthoVect{j})
  =
  \frac{1}{2}
  \mqty[
  0 & 0 & 0 \\
  0 &   \TrCovDerXi & \TrPhiCovDerXi \\
  0 & - \TrPhiCovDerXi & \TrCovDerXi 
  ].
\end{equation}
This implies that $\XiVect$ is a shear-free geodesic vector field,
with $\TrCovDerXi$ and $\TrPhiCovDerXi$ determining the expansion and the twist,
respectively.
Moreover, $\XiVect$ satisfies the Killing equation
$\CovDer{i}\xi_j + \CovDer{j}\xi_i = 0$ if and only if $\TrCovDerXi$ vanishes.

One also has the relations 
\begin{align}
  \label{eq:ExtDerPhiForm}
  \dd{\PhiForm} &= \TrCovDerXi \, \EtaForm \wedge \PhiForm, \\
  \label{eq:ExtDerEtaForm}
  \dd{\EtaForm} &= \TrPhiCovDerXi \PhiForm.
\end{align}
These equations imply that the rank of a NACM structure is either one or three. 
Furthermore, we obtain the following proposition.
\begin{Prop}\label{prop}
  In the case where the rank of a NACM structure is three,
  the following three conditions are mutually equivalent,
  whereas in the case where the rank is one, only conditions (\ref{enu:KV}) and
  (\ref{enu:QSasaki}) are mutually equivalent. 
  \begin{enumerate}
    \renewcommand{\labelenumi}{(\arabic{enumi})}
  \item $\XiVect$ is a Killing vector field, i.e., $\TrCovDerXi = 0$
    \label{enu:KV}
  \item NACM is quasi Sasakian, i.e., $\dd{\PhiForm = 0}$
    \label{enu:QSasaki}
  \item $\XiVect \TrPhiCovDerXi = 0$
    \label{enu:LieDerTwist}
  \end{enumerate}  
\end{Prop}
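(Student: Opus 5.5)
The plan is to derive everything from the two structure equations \eqref{eq:ExtDerPhiForm} and \eqref{eq:ExtDerEtaForm}, together with the fact that $\EtaForm\wedge\PhiForm$ is the volume form. For brevity, write $\alpha=\TrCovDerXi$ and $\beta=\TrPhiCovDerXi$. In three dimensions ($n=1$), Eq.~\eqref{eq:VolForm} gives $\vol{\CompatMet}=\EtaForm\wedge\PhiForm$, which vanishes nowhere.

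\emph{Step 1: (1)$\Leftrightarrow$(2) in any rank.} By \eqref{eq:ExtDerPhiForm}, $\dd{\PhiForm}=\alpha\,\vol{\CompatMet}$. Hence $\dd{\PhiForm}=0$ if and only if $\alpha=0$. By the discussion of $\CovDer{i}\xi_j$ above, $\alpha=0$ is exactly the Killing condition for $\XiVect$. This step uses nothing about the rank, so it covers both cases.

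\emph{Step 2: an integrability identity linking $\alpha$, $\beta$ and $\XiVect\beta$.} Apply $\dd$ to \eqref{eq:ExtDerEtaForm} to get
\begin{equation*}
0=\dd{\beta}\wedge\PhiForm+\beta\,\dd{\PhiForm}.
\end{equation*}
In a $\varphi$-basis, $\PhiForm=\DualForm{2}\wedge\DualForm{3}$, so only the $\EtaForm$-component of $\dd{\beta}$ survives the wedge. This gives $\dd{\beta}\wedge\PhiForm=(\XiVect\beta)\,\EtaForm\wedge\PhiForm$. Combining with Step 1, we obtain
\begin{equation*}
(\XiVect\beta+\alpha\beta)\,\vol{\CompatMet}=0,
\end{equation*}
that is, $\XiVect\beta=-\alpha\beta$.

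\emph{Step 3: the rank distinction.} We have $\EtaForm\wedge\dd{\EtaForm}=\beta\,\vol{\CompatMet}$ and $\dd{\EtaForm}^2=0$ automatically in three dimensions, so the rank is $3$ exactly when $\beta\neq0$ and $1$ when $\beta\equiv0$.
\begin{itemize}
\item In rank three, $\beta\neq0$, so the identity of Step 2 shows $\XiVect\beta=0$ if and only if $\alpha=0$. Hence (3)$\Leftrightarrow$(1), and with Step 1 all three conditions are equivalent.
\item In rank one, $\beta\equiv0$, so (3) holds identically while (1) and (2) need not. Only (1)$\Leftrightarrow$(2) remains, as claimed.
\end{itemize}

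The main obstacle is the pointwise meaning of ``rank three''. The argument needs $\beta\neq0$ at every point in order to cancel it from $\alpha\beta=0$. I would interpret rank three as $\EtaForm\wedge\dd{\EtaForm}$ being nowhere vanishing, consistent with the rank being a locally constant notion here. If instead it only held on an open dense set, I would conclude $\alpha=0$ there and extend to the whole manifold by continuity of $\alpha$.
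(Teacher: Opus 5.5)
Your proof is correct and follows the paper's argument. You get (1)$\Leftrightarrow$(2) from Eq.~\eqref{eq:ExtDerPhiForm}, and in rank three the remaining equivalences come from the identity $\XiVect\TrPhiCovDerXi=-\TrPhiCovDerXi\,\TrCovDerXi$, obtained by differentiating Eq.~\eqref{eq:ExtDerEtaForm}; the paper contracts with $\XiVect$ where you wedge in a $\varphi$-basis, which is the same computation, and your remarks on the rank-one case and on the pointwise non-vanishing of $\TrPhiCovDerXi$ only make explicit what the paper leaves implicit.
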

\begin{proof}
  The implication (\ref{enu:KV}) $\Leftrightarrow$ (\ref{enu:QSasaki}) is immediate from
  Eq.~\eqref{eq:ExtDerPhiForm}.

  In the case where the rank is three,
  the implications (\ref{enu:QSasaki}) $\Rightarrow$ (\ref{enu:LieDerTwist})
  and (\ref{enu:LieDerTwist}) $\Rightarrow$ (\ref{enu:KV}) follow readily from the formula, 
  \begin{equation}
    \LieDer{\XiVect} \TrPhiCovDerXi = - \TrPhiCovDerXi \TrCovDerXi. 
  \end{equation}
  This is obtained by taking the exterior derivative of Eq.~\eqref{eq:ExtDerEtaForm}
  and contracting with $\XiVect$. 
\end{proof}

Curvature properties are clarified as well.
In particular, the Ricci tensor is given by 
\begin{multline}
  \label{eq:RicTens}
  \RicTens(\X,\Y)
  =
  \qty[
  \frac{\RicScal}{2} + \frac{\XiVect \TrCovDerXi}{2}
  + \frac{\TrCovDerXi^2}{4} - \frac{\TrPhiCovDerXi^2}{4}
  ] \CompatMet(\X, \Y)
  \\
  - \qty[
  \frac{\RicScal}{2} + \frac{\XiVect \TrCovDerXi}{2} + \frac{3 \TrCovDerXi^2}{4} -
  \frac{3\TrPhiCovDerXi^2}{4}
  ] \EtaForm(\X) \EtaForm(\Y)
  \\
  - \EtaForm(\X) \TmpForm(\Y) - \TmpForm(\X) \EtaForm(\Y), 
\end{multline}
where $\RicScal$ is the Ricci scalar, and $\TmpForm$ is a {\OneForm} such that 
\begin{equation}
  \TmpForm(\X)
  = 
  \frac{\X \TrCovDerXi}{2} + \frac{(\PhiTens \X) \TrPhiCovDerXi}{2}.  
\end{equation}

\newcommand{\TwoDimFlatMet}{\dd{x}^2+\dd{y}^2}
Now we seek {\EtaEin} metrics on NACM structures $\ACMStruct$.

First, we take local coordinates $w,x,y$ such that 
\begin{align}
  \XiVect    &= \pdv{w}, \\
  \EtaForm   &= \dd{w} + f \dd{y}, \\
  \label{eq:NormACMMet}
  \CompatMet &= \qty(\dd{w} + f \dd{y})^2 + e^{2\Omega}\qty(\TwoDimFlatMet),  
\end{align}
where $f$ is a function of $x,y$ with $f_{,x}\neq 0$ in the rank-three case, and
$f = 0$ in the rank-one case, and $\Omega$ is a function of $w,x,y$.
The existence of such coordinates is established in Appendix \ref{app:MetInNormACM},
where a different normalization of $\Omega$ is used. 
In this setting, $\TrCovDerXi$ is given by 
\begin{equation}
  \TrCovDerXi = 2\Omega_{,w}. 
\end{equation}

Next, we use the orthonormal frame defined by 
\begin{equation}
  \OrthoVect{1} = \XiVect = \pdv{w},
  \quad \OrthoVect{2} = e^{-\Omega} \pdv{x},
  \quad \OrthoVect{3} = e^{-\Omega} \qty(\pdv{y} - f \pdv{w}). 
\end{equation}
As noted in Sec.~\ref{subsec:ACM}, this frame is a $\varphi$-basis up to orientation,
that is, either $\qty{\OrthoVect{1}, \OrthoVect{2},\OrthoVect{3}}$ or
$\qty{\OrthoVect{1},\OrthoVect{3},\OrthoVect{2}}$ is a $\varphi$-basis. 
We assume here that $\qty{\OrthoVect{1}, \OrthoVect{2},\OrthoVect{3}}$ is a
$\varphi$-basis.
Then, from Eq.~\eqref{eq:ExtDerEtaForm},  we have
\begin{equation}
  \TrPhiCovDerXi = \dd{\EtaForm}(\OrthoVect{2},\OrthoVect{3}) = e^{-2\Omega} f_{,x}.  
\end{equation}
From Eq.~\eqref{eq:RicTens},
a necessary and sufficient condition for the metric $\CompatMet$
to be {\EtaEin} is given by 
\begin{equation}
  \TmpForm(\OrthoVect{2}) = \TmpForm(\OrthoVect{3}) = 0. 
\end{equation}
This reads
\begin{align}
  \label{eq:CondEtaEinRank3A}
  \pdv{x} 2 \Omega_{,w} - \qty(\pdv{y} - f \pdv{w}) \qty(e^{-2\Omega}f_{,x}) &= 0,
  \\
  \label{eq:CondEtaEinRank3B}
  \qty(\pdv{y} - f \pdv{w}) 2 \Omega_{,w} + \pdv{x} \qty(e^{-2\Omega}f_{,x}) &= 0. 
\end{align}
We consider these equations in the rank-one and rank-three cases. 

\subsubsection{Rank-one {\EtaEin{}} metrics}

In the rank-one case, the function $f$ of the compatible metric \eqref{eq:NormACMMet} vanishes, 
and hence, Eqs.~\eqref{eq:CondEtaEinRank3A} and \eqref{eq:CondEtaEinRank3B} read
\begin{equation}
  \pdv{x} \Omega_{,w} = \pdv{y} \Omega_{,w} = 0. 
\end{equation}
Solving this equation, we see that the {\EtaEin} metric is given as a warped product
\newcommand{\WFRI}{\alpha}
\begin{equation}
  \label{eq:MetEtaEinRankOne}  
  \CompatMet = \dd{w}^2 + {\WFRI}^2(w) \TwoMet, 
\end{equation}
where $\TwoMet$ is a two-dimensional metric. 

For convenience,
we use Gaussian normal coordinates $\GNCI,\GNCII$ in two dimensions such that $\TwoMet$ is
written as 
\begin{equation}
  \TwoMet = \dd{\GNCI}^2 + \gamma^2(\GNCI,\GNCII) \dd{\GNCII}^2. 
\end{equation}
We note that this {\EtaEin} metric is not quasi-Sasakian metric unless the function
$\WFRI(w)$ is constant.

\subsubsection{Rank-three {\EtaEin{}} metrics}

In contrast to the rank-one case, we do not pursue a general solution of
Eqs.~\eqref{eq:CondEtaEinRank3A} and \eqref{eq:CondEtaEinRank3B}. 
Instead, we restrict to the quasi-Sasakian case,
which yields simple rank-three {\EtaEin} metrics that can be used to construct spacetimes
with a string cloud.  
Solutions of Eqs.~\eqref{eq:CondEtaEinRank3A} and \eqref{eq:CondEtaEinRank3B} under weaker
assumptions are discussed in Appendix~\ref{sec:solutions-eqs}.

In the quasi-Sasakian case,
it follows from Proposition~\ref{prop} that $\Omega_{,w} = 0$, and hence, 
Eqs.~\eqref{eq:CondEtaEinRank3A} and \eqref{eq:CondEtaEinRank3B} read 
\begin{equation}
  \pdv{x} \qty(e^{-2\Omega} f_{,x}) = \pdv{y} \qty(e^{-2\Omega} f_{,x}) = 0. 
\end{equation}
This is readily solved as $e^{-2\Omega}f_{,x} = \text{const.}$ 
For later convenience, we will work in Gaussian normal coordinates so that the quasi-Sasakian 
metric is written as 
\begin{equation}
  \CompatMet
  =
  (\dd{w} + f(\GNCI,\GNCII) \dd{\GNCII})^2
  +
  \dd{\GNCI}^2 + \gamma^2(\GNCI,\GNCII)\dd{\GNCII}^2.
\end{equation}
This coordinate choice is justified in Appendix~\ref{app:MetInNormACM}.
For this metric form, the metric becomes {\EtaEin} if and only if 
\begin{equation}
  f(\GNCI,\GNCII) = C \int^{\GNCI} \gamma(\GNCI',\GNCII) \dd{\GNCI'}, 
\end{equation}
where $C$ is a non-vanishing constant.
In particular, when $C = 2$, the quasi-Sasakian structure reduces to a Sasakian structure.
Scaling the coordinates, 
we can write the metric as a constant multiple of the Sasakian metric 
\begin{equation}
  \label{eq:MetEtaEinRankThree}
  \CompatMet
  =
  \frac{1}{C^2} \qty[(\dd{w} + f(\GNCI,\GNCII) \dd{\GNCII})^2 + \dd{\GNCI}^2 +
  \gamma^2(\GNCI,\GNCII) \dd{\GNCII}^2], 
  \quad
  f(\GNCI,\GNCII) = \int^{\GNCI} \gamma(\GNCI',\GNCII) \dd{\GNCI'}.  
\end{equation}
In the following, we shall set the overall factor $C=1$, 
because this factor can be absorbed into the scale factors
of the spacetime metric constructed in the next section.

\section{Spacetime construction with a string cloud}
\label{sec:spac-constr-with}

We construct spacetimes $(\Mani,\Met)$ admitting simple string clouds by using
three-dimensional NACM structures $\ACMStruct$.
The metrics $\CompatMet$ are assumed to be {\EtaEin}
and are given by Eq.~\eqref{eq:MetEtaEinRankOne} in the rank-one case and 
by Eq.~\eqref{eq:MetEtaEinRankThree} with $C=1$ in the rank-three case.

\subsection{Metrics}
\label{subsec:SCMetric}
We start from a static spacetime with the metric,
\begin{equation}
  \label{eq:StaticMet}
  \tilde{\Met} = - \dd{t}^2 + \CompatMet, 
\end{equation}
which has been studied in Ref.~\cite{Ishihara2022InhomESU}.
This spacetime admits a string worldsheet tangent to the timelike Killing vector field
$\pdv*{t}$ and $\XiVect$.
For such a cohomogeneity-one configuration, 
the Nambu-Goto equation reduces to the geodesic equation on the orbit space $\Mani/G$,
where $G$ is the isometry group generated by $\pdv*{t}$ \cite{Ishihara2005C1String}. 
In the present setup, the orbit space is identified with the NACM manifold, 
where $\XiVect$ is geodesic; hence the Nambu-Goto equation is satisfied.

For the special case in which $\XiVect$ is a Killing vector field
(equivalently, the NACM structure is quasi-Sasakian), 
the static metric \eqref{eq:StaticMet} admits an extension. 
In this case, the string worldsheet is cohomogeneity-one with respect to $\XiVect$, 
and the Nambu-Goto equation reduces to the geodesic equation on the orbit
space modulo the isometry group generated by $\XiVect$.
Extending the original orbit space metric, 
\newcommand{\ThreeLorentzMet}{\TensSty{h}'}
\newcommand{\WFLorentz}{\beta}
we introduce a time-dependent scale factor $\WFLorentz(t)$, 
\begin{equation}
  \ThreeLorentzMet
  =
  -\dd{t}^2 + \WFLorentz^2(t) \qty[ \dd{\GNCI}^2 + \gamma^2(\GNCI,\GNCII) \dd{\GNCII}^2], 
\end{equation}
in which $\pdv*{t}$ remains geodesic and the Nambu-Goto equation continues to hold. 
The corresponding spacetime metric is written as 
\begin{equation}
  \label{eq:DynaicalMet}
  \tilde{\Met} = - \dd{t}^2 + (\dd{w}^2 + f(\GNCI,\GNCII) \dd{\GNCII})^2 + \WFLorentz^2(t)
  \qty[\dd{\GNCI}^2 +  \gamma^2(\GNCI,\GNCII) \dd{\GNCII}^2],
\end{equation}
where
\begin{equation}
  f(\GNCI,\GNCII) =
  \begin{cases}
    0, & \text{rank $1$}, \\
    \displaystyle
    \int^\GNCI \gamma(\GNCI',\GNCII) \dd{\GNCI'}, & \text{rank $3$}.
  \end{cases}
\end{equation}
This metric is no longer static but retains spatial symmetry associated with $\XiVect
= \pdv*{w}$.

We now perform a conformal rescaling of the metrics $\tilde{\Met}$ given in
Eqs.~\eqref{eq:StaticMet} and \eqref{eq:DynaicalMet} by setting 
\begin{equation}
  \label{eq:ConfScal}
  \Met = a^{2} \tilde{\Met}. 
\end{equation}
For these metrics, we require that the two-dimensional timelike surfaces spanned by $\pdv*{t}$ and 
$\XiVect$ satisfy the Nambu-Goto equation. 
This requirement fixes the conformal factor $a$ to be 
\begin{equation}
  a =
  \begin{cases}
    a(t,w), & \text{rank $1$}, \\
    a(t),   & \text{rank $3$}. 
  \end{cases}
\end{equation}
Accordingly, the spacetime metrics take the following forms: 
\begin{enumerate}
\item In the rank-one case,
  \begin{equation}
    \label{eq:SCMetR1Uni}
    \Met
    =
    \Met[a,b,\gamma]
    =
    a^2(t,w) \qty(-\dd{t}^2+\dd{w}^2)
    +
    b^2(t,w) \qty(\dd{\GNCI}^2+\gamma^2(\GNCI,\GNCII)\dd{\GNCII}^2),
  \end{equation}
  where
  \begin{equation}
    b(t,w) = 
    \begin{cases}
      a(t,w) \WFRI(w), \\
      a(t,w) \WFLorentz(t). 
    \end{cases}
  \end{equation}
\item In the rank-three case, 
  \begin{equation}
    \label{eq:SCMetSPR3}
    \Met
    =
    \Met[a,b,\gamma]
    =
    - \dd{t}^2
    +
    a^2(t) (\dd{w} + f(\GNCI,\GNCII) \dd{\GNCII})^2
    +
    b^2(t) \qty(\dd{\GNCI}^2 + \gamma^2(\GNCI,\GNCII)\dd{\GNCII}^2), 
  \end{equation}
  where
  \begin{equation}
    \label{eq:DeterminF}
    f(\GNCI,\GNCII) = \int^{\GNCI} \gamma(\GNCI',\GNCII) \dd{\GNCI'}. 
  \end{equation}
  We note that the time coordinate has been changed and the scale factors $a(t)$ 
  and $b(t)$ are redefined. 
\end{enumerate}
This completes the construction of the spacetimes $(\Mani,\Met)$ admitting a simple string
cloud.

For a generic choice of the function $\gamma$, the metrics $\Met[a,b,\gamma]$ exhibit no manifest
Killing symmetries, apart from the Killing vector field $\pdv*{w}$ in the rank-three case. 
In contrast, for special choices of $\gamma$ such that the two-dimensional metric
$\dd{\GNCI}^2 + \gamma^2(\GNCI,\GNCII) \dd{\GNCII}^2$ has a constant Gaussian curvature
$\ConstCurvTwoDim = -1,0,1$, the metrics $\Met[a,b,\gamma]$ 
acquire additional Killing symmetries. 
In this case one may take 
\begin{equation}
  \gamma(\GNCI,\GNCII) = \Sigma(\GNCI;\ConstCurvTwoDim),
\end{equation}
where 
\begin{equation}
  \quad 
  \Sigma(\GNCI;\ConstCurvTwoDim) \coloneq
  \begin{cases}
    \sin \GNCI,   & \ConstCurvTwoDim = 1, \\
    \GNCI,        & \ConstCurvTwoDim = 0, \\
    \sinh \GNCI, & \ConstCurvTwoDim = -1, 
  \end{cases}
\end{equation}
In fact, in the rank-one case, $\Met[a,b,\Sigma]$ admits spherical ($\ConstCurvTwoDim=1$), planar
($\ConstCurvTwoDim=0$), or hyperbolic ($\ConstCurvTwoDim=-1$) symmetry,
while in the rank-three case, $\Met[a,b,\Sigma]$ admits the symmetry of the locally rotationally
symmetric (LRS) Bianchi type II ($\ConstCurvTwoDim=0$), VIII ($\ConstCurvTwoDim=-1$), or IX
($\ConstCurvTwoDim=1$). 
From this perspective, we regard $\Met[a,b,\gamma]$ as deformations of the
metric $\Met[a,b,\Sigma]$ admitting the above Killing symmetries. 
In what follows, we refer to the metrics $\Met[a,b,\gamma]$ as deformed metrics. 
The deformation is controlled by the function $\gamma(\GNCI,\GNCII)$.

\subsection{String cloud}

In the spacetimes with the deformed metrics $\Met[a,b,\gamma]$, 
the string worldsheets spanned by $\pdv*{t}$ and $\XiVect=\pdv*{w}$ satisfy the Nambu-Goto
equation.
Such strings extend along the spacelike vector field $\XiVect$. 
We refer to these strings as $\xi$-strings. 
A string cloud is constructed as a family of $\xi$-strings. 

\subsubsection{Rank-one case}

The deformed metric $\Met[a,b,\gamma]$ is given by Eq.~\eqref{eq:SCMetR1Uni}.
We use the orthonormal frame $\qty{\OrthoVect{a}}$ given as follows:
\begin{equation}
  \label{eq:OrthoNorFramR1}
  \OrthoVect{0} = \frac{1}{a} \pdv{t},
  \quad \OrthoVect{1} = \frac{1}{a} \pdv{w},
  \quad \OrthoVect{2} = \frac{1}{b} \pdv{\GNCI},
  \quad \OrthoVect{3} = \frac{1}{b \gamma} \pdv{\GNCII}. 
\end{equation}

First, we consider the energy-momentum tensor of a single $\xi$-string.
Since the worldsheet is spanned by $\pdv*{t}$ and $\XiVect=\pdv*{w}$,
the coordinates $\GNCI, \GNCII$ are constant along the worldsheet.
Let $\GNCI_0, \GNCII_0$ be the constants.
The frame components of the energy-momentum tensor $\EneMom$ are calculated as
\begin{equation}
  \label{eq:EneMomSingleString}
  T_{ab}
  =
  \EneMom(\OrthoVect{a},\OrthoVect{b})
  =
  \mu \frac{\delta(\GNCI-\GNCI_0)\delta(\GNCII-\GNCII_0)}{b^2(t,w) \sqrt{\gamma^2(\GNCI,\GNCII)}}
  \diag(1,-1,0,0), 
\end{equation}
where
$\mu$ is the line density of the string energy.
The derivation is given in Appendix~\ref{appendix:deriv}. 
\newcommand{\TwoMani}{\mathcal{M}_{2}}
We introduce a two-dimensional string source space $\TwoMani$, endowed with the metric 
$\dd{\GNCI}^2+\gamma^2\dd{\GNCII}^2$,
so that $\delta(\GNCI - \GNCI_0)\delta(\GNCII - \GNCII_0)/\sqrt{\gamma^2}$
is interpreted as an invariant delta function.

Next, we extend this energy-momentum tensor to a discrete collection of $\xi$-strings. 
Let $N$ be the number of $\xi$-strings with (possibly different) line energy densities
$\mu_i, \ i = 1, \dots, N$.  
Let the worldsheets be given by $\GNCI = \GNCI_i, \GNCII = \GNCII_i$. 
We take the energy-momentum tensor to be given by the sum of the individual contributions,
\begin{equation}
  T_{ab}
  =
  \frac{1}{b^2(t,w)}
  \qty[
  \sum_i
  \mu_i\frac{\delta(\GNCI-\GNCI_i)\delta(\GNCII-\GNCII_i)}{\sqrt{\gamma^2(\GNCI,\GNCII)}}
  ]
  \diag(1,-1,0,0). 
\end{equation}
The factor $\sum_i \mu_i \delta(\GNCI-\GNCI_i) \delta(\GNCII-\GNCII_i) / \sqrt{\gamma^2}$ is interpreted as
the strings' linear energy density per unit area on the string source space $\TwoMani$.

Finally, we construct a string cloud as a continuous distribution of the $\xi$-strings.
In the continuum limit, 
the energy-momentum tensor is obtained by replacing
the above discrete density with a non-negative continuous areal density
$\mathcal{E}(\GNCI,\GNCII)$, namely, 
\begin{equation}
  \label{eq:EneMomSCR1}
  T_{ab}
  =
  \frac{\mathcal{E}(\GNCI,\GNCII)}{b^2(t,w)}
  \diag(1,-1,0,0)
\end{equation}
This energy-momentum tensor satisfies the local conservation law
$\CovDer{} \cdot\EneMom = 0$, 
supporting the validity of the construction. 
The energy-momentum tensor also satisfies all the standard energy conditions,
namely the null, weak, dominant, and strong energy conditions.
A key feature of the string cloud construction is that it admits a freedom in the
choice of the areal density function $\mathcal{E}(\GNCI,\GNCII)$ on the string source space
$\TwoMani$.  

\subsubsection{Rank-three case}

The deformed metric $\Met[a,b,\gamma]$ is given by Eq.~\eqref{eq:SCMetSPR3}. 
We use the orthonormal frame
\begin{equation}
  \label{eq:OrthoNorFramR3}
  \OrthoVect{0} = \pdv{t},
  \quad \OrthoVect{1} = \frac{1}{a} \pdv{w},
  \quad \OrthoVect{2} = \frac{1}{b} \pdv{\GNCI},
  \quad \OrthoVect{3} = \frac{1}{b \gamma} \qty(\pdv{\GNCII} - f \pdv{w}). 
\end{equation}
The energy-momentum tensor can be constructed in the same manner as in the rank-one case.
The resulting expression has the same functional form as Eq.~\eqref{eq:EneMomSCR1},
with the difference that the function $b$ now depends only on $t$, namely, 
\begin{equation}
  \label{eq:EneMomSCR3}
  T_{ab}
  =
  \frac{\mathcal{E}(\GNCI,\GNCII)}{b^2(t)}
  \diag(1,-1,0,0). 
\end{equation}
This energy-momentum tensor also satisfies the local conservation law 
and all the standard energy conditions,
and admits a freedom in the choice of the areal 
density function $\mathcal{E}(\GNCI,\GNCII)$ on $\TwoMani$. 

\subsection{The Einstein tensor}
We compute the Einstein tensors $\EinTens[a,b,\gamma]$ of the deformed metrics $\Met[a,b,\gamma]$,
with particular attention to the $\gamma$-dependent contributions. 

\subsubsection{Rank-one case}

The metric and the orthonormal frame are given in Eqs.~\eqref{eq:SCMetR1Uni} and
\eqref{eq:OrthoNorFramR1}. 
The nonvanishing components are calculated as
\begin{align}
  \label{eq:EinTensR100}
  G_{00}[a,b,\gamma]
  &= \frac{1}{a^2} \qty[
    2 \frac{a'}{a} \frac{b'}{b}
    - \qty(\frac{b'}{b})^2
    - 2 \frac{b''}{b}
    + 2 \frac{\dot{a}}{a} \frac{\dot{b}}{b}
    + \qty(\frac{\dot{b}}{b})^2
    ]
    +
    \frac{\GaussCurv[\gamma]}{b^2}, \\
  \label{eq:EinTensR101}
  G_{01}[a,b,\gamma]
  &= - \frac{2}{a^2}
    \qty[
    - \frac{\dot{a}}{a} \frac{b'}{b}
    - \frac{a'}{a} \frac{\dot{b}}{b}
    + \frac{\dot{b}'}{b}
    ], \\
  \label{eq:EinTensR111}
  G_{11}[a,b,\gamma]
  &= \frac{1}{a^2} \qty[
    2 \frac{a'}{a} \frac{b'}{b}
    + \qty(\frac{b'}{b})^2
    + 2 \frac{\dot{a}}{a} \frac{\dot{b}}{b}
    - \qty(\frac{\dot{b}}{b})^2
    - 2 \frac{\ddot{b}}{b}
    ]
    -
    \frac{\GaussCurv[\gamma]}{b^2}, \\
  \label{eq:EinTensR122}
  G_{22}[a,b,\gamma]
  &= G_{33}[a,b,\gamma]
    =\frac{1}{a^2} \qty[
    - \qty(\frac{a'}{a})^2
    + \frac{a''}{a}
    + \frac{b''}{b}
    + \qty(\frac{\dot{a}}{a})^2
    - \frac{\ddot{a}}{a}
    - \frac{\ddot{b}}{b}
    ], 
\end{align}
where the dots and the primes denote the differentiation with respect to $t$ and $w$, 
and $\GaussCurv[\gamma]$ is the Gaussian curvature of the string source space $\TwoMani$,
which is explicitly given by
\begin{equation}
  \GaussCurv[\gamma] = -\frac{\gamma_{,\GNCI\GNCI}}{\gamma}. 
\end{equation}
The $\gamma$-dependent part separates and appears only in $G_{00}$ and $G_{11}$,
in the form $\pm \GaussCurv[\gamma]/b^2$.

\subsubsection{Rank-three case}
The metric and the orthonormal frame are given in Eq.~\eqref{eq:SCMetSPR3} and 
Eq.~\eqref{eq:OrthoNorFramR3}.
We use the same notation as in the rank-one case. 
In contrast, the metric functions $a, b$ are functions of $t$ only. 
The nonvanishing components of the Einstein tensor are calculated as
\begin{align}
  \label{eq:EinTensR300}
  G_{00}[a,b,\gamma]
  &=
    -\frac{a^2}{4b^4} + 2\frac{\dot{a}}{a} \frac{\dot{b}}{b}
    + \qty(\frac{\dot{b}}{b})^2
    + \frac{\GaussCurv[\gamma]}{b^2}, \\
  \label{eq:EinTensR311}
  G_{11}[a,b,\gamma]
  &=
    \frac{3}{4}\frac{a^2}{b^4} - \qty(\frac{\dot{b}}{b})^2 - 2 \frac{\ddot{b}}{b}
    - \frac{\GaussCurv[\gamma]}{b^2}, \\
  \label{eq:EinTensR322}
  G_{22}[a,b,\gamma]
  &= G_{33}[a,b,\gamma]
    =  - \frac{a^2}{4b^4} - \frac{\dot{a}}{a} \frac{\dot{b}}{b}
    - \frac{\ddot{a}}{a} - \frac{\ddot{b}}{b}. 
\end{align}
The $\gamma$-dependent part again appears only in $G_{00}$ and $G_{11}$ and takes the
form $\pm \GaussCurv[\gamma]/b^2$, exactly as in the rank-one case.

\section{Deformations of symmetric solutions}
\label{sec:deform-symm-solut}

We consider the Einstein equations for the deformed metric $\Met[a,b,\gamma]$ given by
Eqs.~\eqref{eq:SCMetR1Uni} and \eqref{eq:SCMetSPR3}.
When $\gamma(\GNCI,\GNCII) = \Sigma(\GNCI;\ConstCurvTwoDim)$,
the metric reduces to $\Met[a,b,\Sigma]$ that admits high Killing symmetries. 
In what follows, we take $\Met[a,b,\Sigma]$ as the starting point and regard the general metric
$\Met[a,b,\gamma]$ as its deformation obtained by replacing $\Sigma$ with $\gamma$. 

We now state our main result.

\begin{Theo}\label{Theo:Defo}
  Let $\Met[a,b,\Sigma]$ be a solution of the Einstein equations
  \begin{equation}
    \EinTens[a,b,\Sigma] + \Lambda \Met[a,b,\Sigma] = 8\pi \EneMom, 
  \end{equation}
  where $\EneMom{}$ is the energy-momentum tensor of matter.
  Then the deformed metric $\Met[a,b,\gamma]$ satisfies the Einstein equations with an additional
  string cloud,
  \begin{equation}
    \label{eq:EinEqSC}
    \EinTens[a,b,\gamma] + \Lambda \Met[a,b,\gamma] =
    8\pi(\EneMom{}+\EneMom_{\text{string cloud}}), 
  \end{equation}
  provided that 
  \begin{equation}
    \label{eq:EinEqString}
    \GaussCurv[\gamma] - \GaussCurv[\Sigma] = 8 \pi \mathcal{E},
  \end{equation}
  where $\GaussCurv$ denotes the Gaussian curvature of the string source space
  $\TwoMani$, 
  and $\mathcal{E}$ is the energy density function of the string cloud on $\TwoMani$. 
\end{Theo}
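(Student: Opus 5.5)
The proof will compare the two Einstein tensors component by component in the orthonormal frames \eqref{eq:OrthoNorFramR1} (rank one) and \eqref{eq:OrthoNorFramR3} (rank three). The key point is that these frames are built from $\gamma$ in exactly the same way for $\Met[a,b,\gamma]$ and $\Met[a,b,\Sigma]$. So in each frame the metric has components $\eta_{ab}=\diag(-1,1,1,1)$, independent of $\gamma$. By Eqs.~\eqref{eq:EinTensR100}--\eqref{eq:EinTensR122} and \eqref{eq:EinTensR300}--\eqref{eq:EinTensR322}, the only $\gamma$-dependence of $G_{ab}[a,b,\gamma]$ is the term $\pm\GaussCurv[\gamma]/b^2$ in $G_{00}$ and $G_{11}$. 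Subtracting the two Einstein tensors therefore gives
\begin{equation*}
  G_{ab}[a,b,\gamma]-G_{ab}[a,b,\Sigma]
  =\frac{\GaussCurv[\gamma]-\GaussCurv[\Sigma]}{b^2}\,\diag(1,-1,0,0),
\end{equation*}
and this holds in both the rank-one and rank-three cases.

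Next I would transfer the matter tensor. The components $T_{ab}$ of the original source in the $\Sigma$-frame are functions of $t$ and $w$. They would be \emph{defined} to take the same frame values in the $\gamma$-frame; this is the sense in which $\EneMom$ appears in Eq.~\eqref{eq:EinEqSC}. With $\Lambda\eta_{ab}$ also unchanged, the frame components of Eq.~\eqref{eq:EinEqSC} reduce, after subtracting the undeformed equations, to
\begin{equation*}
  \frac{\GaussCurv[\gamma]-\GaussCurv[\Sigma]}{b^2}\diag(1,-1,0,0)=8\pi\frac{\mathcal{E}}{b^2}\diag(1,-1,0,0).
\end{equation*}
Here I use Eqs.~\eqref{eq:EneMomSCR1} and \eqref{eq:EneMomSCR3} for the string cloud. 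This identity is exactly condition \eqref{eq:EinEqString}, which finishes the argument.

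The main obstacle is not the algebra but the justification of the transfer. One has to check that a source defined by its frame components (for example a perfect fluid with velocity $\OrthoVect{0}$, or a Maxwell field aligned with $\OrthoVect{0},\OrthoVect{1}$) remains a legitimate conserved matter field on the deformed background. Conservation follows automatically: the left-hand side satisfies the contracted Bianchi identity, and the string cloud is separately conserved, as stated above. So $\nabla\cdot\EneMom=0$ holds for the transferred $\EneMom$.

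Two further checks remain. First, since $\GaussCurv[\Sigma]=\ConstCurvTwoDim$ is constant, one must note that a non-constant $\GaussCurv[\gamma]$ is allowed, because $\mathcal{E}$ may depend on $(\GNCI,\GNCII)$. Second, $\mathcal{E}\ge 0$ requires $\GaussCurv[\gamma]\ge\ConstCurvTwoDim$; the theorem simply takes this as given via \eqref{eq:EinEqString}.
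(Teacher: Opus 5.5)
Your proposal is correct and follows the paper's proof. The paper likewise reads off from the frame components that replacing $\Sigma$ by $\gamma$ shifts only $G_{00}$ and $G_{11}$, by $\pm(\GaussCurv[\gamma]-\GaussCurv[\Sigma])/b^2$, and matches this shift against the string-cloud tensor $\mathcal{E}\,b^{-2}\diag(1,-1,0,0)$. Your additional points make precise what the paper leaves implicit: carrying the original matter tensor over via its $t,w$-dependent frame components, its resulting conservation, and the fact that $\mathcal{E}\ge 0$ requires $\GaussCurv[\gamma]\ge\ConstCurvTwoDim$.
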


This Theorem provides a systematic construction of deformed solutions by introducing a string cloud
satisfying Eq.~\eqref{eq:EinEqString}.
In this construction, the problem reduces to solving an equation on the string source space
$\TwoMani$,
with the Gaussian curvature determined by the energy density function. 

\begin{proof}
  From Eqs.~\eqref{eq:EinTensR100}--\eqref{eq:EinTensR122} and 
  \eqref{eq:EinTensR300}--\eqref{eq:EinTensR322}, 
  it follows that under the deformation from $\Sigma$ to $\gamma$, 
  only the following components of the Einstein tensor vary: 
  \begin{align}
    G_{00}[a,b,\gamma]
    &= G_{00}[a,b,\Sigma] + \frac{\GaussCurv[\gamma] - \GaussCurv[\Sigma]}{b^2}, \\
    G_{11}[a,b,\gamma]
    &= G_{11}[a,b,\Sigma] - \frac{\GaussCurv[\gamma] - \GaussCurv[\Sigma]}{b^2}. 
  \end{align}
  The energy-momentum tensor of a string cloud is given by
  Eqs.~\eqref{eq:EneMomSCR1} and \eqref{eq:EneMomSCR3}. 
  Therefore,
  the Einstein equations \eqref{eq:EinEqSC} is satisfied if Eq.~\eqref{eq:EinEqString} holds.
\end{proof}

\section{Examples of the deformations}
\label{sec:examples}

\subsection{The rank-one case}
\label{sec:rank-one-case}

\subsubsection{Deformations of the FLRW universe}
\label{subsec:DefoFLRW}
For the rank-one deformed metric $\Met[a,b,\gamma]$ given by Eq.~\eqref{eq:SCMetR1Uni}, we take
\begin{equation}
  a(t,w) = a(t), \quad b(t,w) = a(t) \Sigma(w;\ConstCurvThreeDim), \quad \ConstCurvThreeDim =
  -1,0,1. 
\end{equation}
Then we have the deformed FLRW metric
\begin{equation}
  \label{eq:MetDefFLRWR1}
  \Met[a,b,\gamma]
  =
  -\dd{t}^2
  + a^2(t)\qty[
  \dd{w}^2
  +
  \Sigma^2(w;\ConstCurvThreeDim) \qty(\dd{\GNCI}^2 + \gamma^2(\GNCI,\GNCII)
  \dd{\GNCII}^2 )
  ]. 
\end{equation}
When $\gamma(\GNCI,\GNCII) = \Sigma(\GNCI;\ConstCurvTwoDim = 1)= \sin \GNCI $, 
the metric reduces to the FLRW metric.
In the FLRW universe model,
the homogeneity and isotropy require the energy-momentum tensor
to take the form of a perfect fluid, 
and the Einstein equations reduces to the Friedmann equation
\begin{equation}
  \qty(\frac{\dot{a}}{a})^2 + \frac{\ConstCurvThreeDim}{a^2 } - \frac{\Lambda}{3} =
  \frac{8\pi}{3} \rho,
\end{equation}
where $\rho$ is the energy density.

By Theorem \ref{Theo:Defo}, when the FLRW universe model is deformed by replacing
$\Sigma(\GNCI;\ConstCurvTwoDim=1)$ with $\gamma(\GNCI,\GNCII)$,
the Einstein equations is satisfied provided that a string cloud is introduced.
The Friedmann equation remains unchanged. 
Hence, although homogeneity and isotropy are broken, 
the cosmic evolution governed by the scale factor $a(t)$ is exactly the same as in the FLRW universe.
All effects of the deformation are compensated by the string cloud.

A notable feature of the string cloud on this cosmological spacetime is that
the Raychaudhuri equation for a null geodesic congruence directed along the string
is unaffected by the string cloud. 
Indeed, for such a geodesic congruence, both the shear and the twist vanish, and moreover 
the contribution from the string energy-momentum tensor is zero. 
Consequently, the Raychaudhuri equation coincides exactly with that of a FLRW universe.
This implies that as long as observations are made along the string direction,
cosmic expansion effects, such as a distance-redshift relation, are not influenced by the string
cloud. 
By contrast, observations made in other directions may exhibit deviations. 
Although the expansion of our string cloud universe is isotropic in the sense that it is governed
by a single scale factor, the distance-redshift relation nevertheless depends on the direction of
the observation.

\subsubsection{Deformations of LRS cosmological models}
We specialize to the case 
\begin{equation}
  a(t,w) = a(t), \quad b(t,w) = a(t)\WFLorentz(t) \eqcolon b(t). 
\end{equation}
The  rank-one deformed metric \eqref{eq:SCMetR1Uni} takes the form 
\begin{equation}
  \Met[a,b,\gamma]
  =
  -\dd{t}^2 + a^2(t) \dd{w}^2 + b^2(t) \qty(\dd{\chi}^2 + \gamma^2(\GNCI,\GNCII) \dd{\sigma}^2),
\end{equation}
This metric reduces to the Kantowski-Sachs model for $\gamma(\GNCI,\GNCII) =
\Sigma(\GNCI;\ConstCurvTwoDim=1)$,
and to the LRS Bianchi type I and III models for $\ConstCurvTwoDim = 0$ and $\ConstCurvTwoDim = -1$,
respectively.
We refer to these models as LRS cosmological models. 
Applying Theorem \ref{Theo:Defo} as in Sec.~\ref{subsec:DefoFLRW},
we obtain deformations of these LRS cosmological models. 
As in the deformations of the FLRW universe, 
the temporal evolution of the scale factors $a(t), b(t)$ remains unchanged under the deformation. 
    
\subsubsection{Deformations of topological Reissner-Nordström-(A)dS black holes}
\label{subsubsec:DefoRNBH}
We consider the static case by setting 
\begin{equation}
  a(t,w) = a(w), \quad b(t,w) = a(w) \WFRI(w) \eqcolon b(w), 
\end{equation}
and introduce a new coordinate $r \coloneq b(w)$. 
The deformed metric \eqref{eq:SCMetR1Uni} takes the form 
\begin{equation}
  \Met = - f_1(r) \dd{t}^2 + f_2(r) \dd{r}^2
  + r^2 \qty(\dd{\GNCI}^2+\gamma^2(\GNCI,\GNCII) \dd{\GNCII}^2). 
\end{equation}
This metric reduces to that of the topological Reissner-Nordström-(A)dS black hole
when 
\begin{equation}
  \gamma(\GNCI,\GNCII) = \Sigma(\GNCI;\ConstCurvTwoDim), \quad 
  f_1(r) = \frac{1}{f_2(r)} 
  = \ConstCurvTwoDim - \frac{2m}{r} + \frac{q^2}{r^2} + \frac{\Lambda}{3} r^2. 
\end{equation}

By Theorem \ref{Theo:Defo}, with $f_1(r)$ and $f_2(r)$ fixed,
the metric remains a solution of the Einstein equations when
$\Sigma(\GNCI;\ConstCurvTwoDim)$ is replaced by a generic function $\gamma(\GNCI,\GNCII)$,
provided that an appropriate string cloud is introduced.
The string cloud compensates for all effects of the deformation.
Thus, we obtain deformations of the topological Reissner-Nordström-(A)dS solution. 
In particular, for $\ConstCurvTwoDim = 1$, we recover the
black hole with stringy hair constructed by Boos and Frolov \cite{Boos2018BHString}. 

We now examine the Killing horizons associated with the static
Killing vector field $\pdv*{t}$, which are given by $f_1(r) = 0$.
Despite undergoing arbitrary deformations on the transverse two-dimensional geometry,
the null generators of the Killing horizon remain unchanged.
Indeed, the horizon generators coincide with the Killing vector field and form a congruence of null
geodesics with vanishing expansion and shear.
Moreover, since the string cloud is stretched perpendicularly to the two dimensions,
the string cloud contribution to the Raychaudhuri equation for the Killing horizon generators
vanishes. 
As a result, the Raychaudhuri equation remains unchanged under the deformation.
This shows that the Killing horizon is rigid under the present class of deformations.
The surface gravity is also unchanged.

\subsection{The rank-three case}

As noted in Sec.~\ref{subsec:SCMetric}, 
the rank-three deformed metric \eqref{eq:SCMetSPR3} describes
deformations of the LRS Bianchi models of type II, VIII, and IX.
We present two representative deformations: a deformation of the topological Einstein-Maxwell-Taub-NUT
solution and a deformation of the closed FLRW model. 

\subsubsection{Deformation of the topological Einstein-Maxwell-Taub-NUT-(A)dS solution}

For the rank-three metric \eqref{eq:SCMetSPR3},
we redefine the time coordinate $t$ and choose the scale factors $a(t)$ and $b(t)$ so that 
the metric takes the form 
\begin{equation}
  \Met
  =
  - \frac{\dd{t}^2}{U(t)}
  + (2l)^2 U(t) \qty(\dd{w} + f(\GNCI,\GNCII) \dd{\GNCII})^2
  + (t^2 + l^2) \qty(\dd{\GNCI}^2 + \gamma^2(\GNCI,\GNCII)  \dd{\GNCII}^2), 
\end{equation}
where $l$ is a constant, which will be identified with the NUT parameter below. 
This metric reduces to the topological Einstein-Maxwell-Taub-NUT-(A)dS solution
when
\begin{equation}
  \gamma(\GNCI,\GNCII) = \Sigma(\GNCI;\ConstCurvTwoDim),
  \quad
  U(t)
  =
  \frac{\ConstCurvTwoDim(t^2-l^2) -2 m t + q^2 - \Lambda(t^4/3 + 2l^2 t^2 - l^4)}
  {t^2 + l^2}. 
\end{equation}
Here $m$ denotes a mass parameter and $q$ an electric charge \cite{ExactSpacetimes}.
Applying Theorem \ref{Theo:Defo} with $U(t)$ fixed as in Sec.~\ref{subsubsec:DefoRNBH}
yields a class of deformations of the topological Einstein-Maxwell-Taub-NUT-(A)dS
solution.

\subsubsection{Deformation of the closed FLRW universe}

We specialize to the case of a single scale factor by setting $a(t) = b(t)$.
The metric \eqref{eq:SCMetSPR3} takes the form 
\begin{equation}
  \Met
  =
  -\dd{t}^2
  + a^2(t) \CompatMet, \quad
  \CompatMet =
  \frac{1}{4}
  \qty[
  (\dd{w} + f(\GNCI) \dd{\GNCII})^2
  +
  \dd{\GNCI}^2+\gamma^2(\GNCI,\GNCII)\dd{\GNCII}^2
  ],
\end{equation}
where we have redefined the scale factor by $a(t) \to a(t)/2$ 
in order that the three-metric $\CompatMet$ be normalized to the unit round $S^3$
when it takes the spherical form. 
Indeed, when $\gamma(\GNCI,\GNCII) = \Sigma(\GNCI;\ConstCurvTwoDim=1) = \sin \GNCI$, 
the spatial metric $\CompatMet$ reduces to that of the unit round $S^3$, 
written in coordinates adapted to the Hopf fibration. 
Consequently, the spacetime metric $\Met$ coincides with the closed FLRW metric.
Applying Theorem \ref{Theo:Defo}, we obtain deformations of the closed FLRW universe.
As in Sec.~\ref{subsec:DefoFLRW}, the temporal evolution of the scale factor is unchanged
by the deformation. 
The difference from the deformed FLRW universe discussed in Sec.~\ref{subsec:DefoFLRW} lies in
the string cloud.
In the present rank-three case, string worldsheets have nonvanishing twist potentials,
as in Ref.~\cite{Kozaki2024nullcontact}, 
whereas in the rank-one case the twist potentials vanish.

\section{Conclusion}
\label{sec:conclusion}

We established in Theorem~\ref{Theo:Defo} a systematic framework that realizes 
deformations of highly symmetric spacetimes as exact solutions to the Einstein equations
with a string cloud.
The framework is based on three-dimensional {\EtaEin} metrics compatible with
normal almost contact structures, 
from which the corresponding spacetime metrics are constructed.
The string cloud provides the energy-momentum tensor required to compensate for the
additional terms in the Einstein tensor induced by the deformation,
so that the Einstein equations are satisfied.

The spacetime metrics are given by Eqs.~\eqref{eq:SCMetR1Uni} and \eqref{eq:SCMetSPR3}.
The metric \eqref{eq:SCMetR1Uni} admits no Killing vector fields in general,
while the metric \eqref{eq:SCMetSPR3} admits only a single Killing vector field. 
Nevertheless, the resulting solutions arise as deformations of highly symmetric spacetimes. 
These include deformations of FLRW universes, the Kantowski-Sachs spacetime, 
locally rotationally symmetric (LRS) Bianchi type I, II, III, VIII, and IX spacetimes,
and Einstein-Maxwell-Taub-NUT-(A)dS solutions and Reissner-Nordström-(A)dS black holes,
each admitting spherical, planar, or hyperbolic symmetry.

For cosmological solutions, 
the evolution equations for the scale factors remain unchanged under the deformation. 
In this sense, the dynamical expansion history, as determined by the Einstein equations,
coincides with that of the corresponding symmetric solutions.
From an observational perspective, 
cosmological expansion is described through the behavior of null geodesic congruences. 
We examine the Raychaudhuri equation for a null geodesic congruence propagating 
along the string direction in the deformed FLRW universes.
We find that the Raychaudhuri equation also remains unchanged.
This implies that the presence of the string cloud leaves the distance-redshift relation
for null geodesics along the string direction unchanged. 
Null geodesic congruences in other directions have not been analyzed in the present work. 
It remains an open question whether the string cloud can induce an apparent acceleration
or deceleration through direction-dependent optical effects.

In the deformations of the Reissner-Nordström-(A)dS black holes admitting spherical, 
planar, or hyperbolic symmetry,
we found that the Killing horizons are unaffected by the deformation, 
except for possible changes in their intrinsic geometry.
This suggests that the spacetime structure,
at least at the level of horizon geometry,
is robust under the deformation. 
The deformed metrics of the Reissner-Nordström-(A)dS black holes are included in the
general family studied in Ref.~\cite{Koga2021PhotonSurf},
and metrics in this class admit null geodesics confined to hypersurfaces $r =
\text{const}.$ 
This indicates that photon surfaces exist even when black hole spacetimes are 
deformed. 
Possible distortions of the photon surface, if observable, could therefore provide a novel
probe of string cloud configurations.

\begin{acknowledgments}
  The authors thank Ken-ichi Nakao and Hirotaka Yoshino for useful discussions.
  HI and YM are partly supported by MEXT Promotion of Distinctive Joint
  Research Center Program JPMXP 0723833165. 
  TK is partly supported by JSPS KAKENHI Grant Number JP20K03772. 
\end{acknowledgments}

\appendix

\section{Local metric in a three-dimensional normal ACM manifold}
\label{app:MetInNormACM}
We show that a local metric in a three-dimensional normal ACM manifold is given by
Eq.~\eqref{eq:NormACMMet}.
The derivation is mainly based on the proof of Theorem 4 in \cite{Olszak1986NACM3D}. 
We denote the normal ACM structure as $\ACMStruct$. 

\newcommand{\QSMet}{\tilde{\CompatMet}}
First, we take a function $\Omega$ so that
\begin{equation}
  \label{eq:DefOmega}
  \XiVect \Omega = \frac{1}{2}\TrCovDerXi, 
\end{equation}
and construct a new metric 
\begin{equation}
  \label{eq:ConstQSMet}
  \QSMet = e^{-2\Omega} \CompatMet + (1 - e^{-2\Omega}) \EtaForm \otimes \EtaForm. 
\end{equation}
This metric is also compatible with the normal AC structure $\ACStruct$, 
and furthermore, the new ACM structure $(\PhiTens, \XiVect, \EtaForm, \QSMet)$ is quasi-Sasakian. 
Therefore, $\XiVect$ is a Killing vector field with respect to the $\QSMet$.

Next, we take local coordinates $w, x, y$ such that
\begin{equation}
  \label{eq:CondCoordW}
  \XiVect = \pdv{w}. 
\end{equation}
The coordinates are not unique.
Indeed, $w',x',y'$ defined by
\begin{equation}
  \label{eq:FreedomCoordTrans}
  w' \coloneq w + w'(x,y), \quad x' \coloneq x'(x,y), \quad y' \coloneq y'(x,y). 
\end{equation}
also satisfy the condition. 
Using this coordinate freedom, we can write the metric $\QSMet$ as 
\begin{equation}
  \label{eq:QSMetConfFlat}
  \QSMet = (\dd{w} + f(x,y) \dd{y})^2 + e^{2 \Sigma(x,y)}(\TwoDimFlatMet), 
\end{equation}
for some functions $f(x,y)$ and $\Sigma(x,y)$.
In these coordinates, it follows from Eq.~\eqref{eq:MetDualEtaXi} that the {\OneForm} $\EtaForm$ is
written as 
\begin{align}
  \EtaForm = \dd{w} + f(x,y) \dd{y}. 
\end{align}
While we have written the two-dimensional part in a conformally flat form,
it can also be expressed in Gaussian normal coordinates so that 
\begin{equation}
  \label{eq:QSMetGaussNom}
  \QSMet
  =
  (\dd{w} + f(\GNCI,\GNCII) \dd{\GNCII})^2 + \dd{\GNCI}^2
  + \gamma^2(\GNCI,\GNCII)\dd{\GNCII}^2. 
\end{equation}

Finally, from Eq.~\eqref{eq:ConstQSMet}, 
the original metric $\CompatMet$ can be expressed as 
\begin{equation}
  \CompatMet = (\dd{w} + f(x,y) \dd{y})^2 + e^{2 \Omega(w,x,y)} (\TwoDimFlatMet), 
\end{equation}
where we have absorbed $\Sigma(x,y)$ into the function $\Omega(w,x,y)$. 
This absorption does not affect Eq.~\eqref{eq:DefOmega},
and hence, for this metric representation,  it holds that 
\begin{equation}
  \TrCovDerXi = 2\Omega_{,w}. 
\end{equation}

\section{Solutions of Eqs.~\eqref{eq:CondEtaEinRank3A} and \eqref{eq:CondEtaEinRank3B}}
\label{sec:solutions-eqs}

In this appendix, we solve Eqs.~\eqref{eq:CondEtaEinRank3A} and \eqref{eq:CondEtaEinRank3B} under
a different assumption from the quasi-Sasakian condition adopted in Sec.~\ref{sec:almost-cont-struct}.
Specifically, we assume that
\begin{equation}
  \label{eq:Assumption}
 f_{,y} = \Omega_{,y} = 0.  
\end{equation}
Under this assumption, Eqs.~\eqref{eq:CondEtaEinRank3A} and \eqref{eq:CondEtaEinRank3B} reduce to 
\begin{align}
  \label{eq:NecSufCondEtaEinNonTriv1}
  \pdv{x} 2 \Omega_{,w} + f \pdv{w} \qty(e^{-2\Omega} f') &= 0,
  \\
  \label{eq:NecSufCondEtaEinNonTriv2}
  - f \pdv{w} 2 \Omega_{,w} + \pdv{x} \qty(e^{-2\Omega}f') &= 0, 
\end{align}
where $f'$ denotes the differentiation with respect to $x$. 
These equations are solved as follows. 

First, using Eqs.~\eqref{eq:NecSufCondEtaEinNonTriv2} and \eqref{eq:NecSufCondEtaEinNonTriv1}, 
we obtain 
\begin{equation}
   \pdv{w} \qty[2 \Omega_{,ww} e^{2 \Omega} - f'^2 e^{-2\Omega}] = 0, 
\end{equation}
which leads to 
\begin{equation}
  2 \Omega_{,ww} e^{2\Omega} - f'^2 e^{-2\Omega} = c_1, 
\end{equation}
where $c_1$ is a function of $x$. 
Next, multiplying this equation by $\Omega_{,w} e^{-2\Omega}$, we have
\begin{equation}
   \pdv{w}\qty[(\Omega_{,w})^2 + \frac{f'^2}{4} e^{-4\Omega} + \frac{c_1}{2} e^{-2\Omega}] = 0, 
\end{equation}
which leads to 
\begin{equation}
  (\Omega_{,w})^2  + \frac{f'^2}{4} e^{-2\Omega} + \frac{c_1}{2} e^{-2\Omega} = \frac{c_2}{4},
\end{equation}
where $c_2$ is a function of $x$. 
To integrate this equation, we multiply both sides by $(2e^{2\Omega})^2$.
Then we have
\begin{equation}
\label{eq:FinEq}
   \qty(\pdv{J}{w})^2 = c_2 J^2 - 2c_1 J - f'^2, 
\end{equation}
where $J\coloneq e^{2 \Omega}$.
This equation is solved explicitly as 
\begin{equation}
   \label{eq:SolutionOfJ}
   J =
   \begin{dcases}
     - \dfrac{f'^2}{2c_1} - \dfrac{c_1}{2} (w - w_0)^2, 
     & c_2 = 0, \\
     \dfrac{c_1}{c_2}
     +
     \sqrt{\qty(\frac{c_1}{c_2})^2 + \frac{f'{}^2}{c_2}} \cosh \sqrt{c_2}(w - w_0), 
     & c_2 > 0, \\
     \dfrac{c_1}{c_2}
     +
     \sqrt{\qty(\frac{c_1}{c_2})^2 + \frac{f'{}^2}{c_2}} \cos \sqrt{-c_2}(w - w_0), 
     & c_2 < 0, 
   \end{dcases}
\end{equation}
where $w_0$ is a function of $x$.

The functions $c_1, c_2$, and $w_0$ must be chosen so that
Eqs.~\eqref{eq:NecSufCondEtaEinNonTriv1} and \eqref{eq:NecSufCondEtaEinNonTriv2} are satisfied.
We analyze these constraints in detail and explicitly construct the corresponding {\EtaEin}-metrics. 

\subsection{The case that $c_2=0$}
\label{sec:case-that-c_2=0}

Substituting Eq.~\eqref{eq:SolutionOfJ} to Eqs.~\eqref{eq:NecSufCondEtaEinNonTriv1} and
\eqref{eq:NecSufCondEtaEinNonTriv2},
we find that $w_0$ is a constant, which can be set to zero, and $c_1$ is given
by 
\begin{equation}
  c_1 = \dfrac{F''}{F}, 
\end{equation}
where
\begin{equation}
  F \coloneq \int f \, \dd{x}. 
\end{equation}
In this setting, the metric is written as 
\begin{equation}
  \label{eq:Rank3NACMEtaEinRiemMet}
   \CompatMet = (\dd{w} + F' \dd{y})^2 + \frac{w^2 + F^2}{2} \TwoMet, 
\end{equation}
where $\TwoMet$ is the two-dimensional metric defined by 
\begin{equation}
  \TwoMet = -\frac{F''}{F} \qty(\dd{x}^2 + \dd{y}^2). 
\end{equation}
This is the explicit {\EtaEin} metric under the assumption \eqref{eq:Assumption} with $c_2 = 0$. 
The Ricci tensor is calculated as
\begin{equation}
   \label{eq:Lambda}
   \RicTens = \lambda \CompatMet - \lambda \EtaForm \otimes \EtaForm,
   \quad
   \lambda = \frac{R_2 - 1}{w^2 + F^2}. 
\end{equation}
where $\TwoRic$ is the Ricci scalar of the two-dimensional metric $\TwoMet$, 
which is simply given by
\begin{equation}
  \label{eq:TwoRicScaVaniC2}
  \TwoRic = -\frac{F}{F''} \qty[\ln (-\frac{F}{F''})]''. 
\end{equation}

\subsection{The case that $c_2 > 0$}
\newcommand{\am}{\text{am}}
\newcommand{\dn}{\text{dn}}
Substituting Eq.~\eqref{eq:SolutionOfJ} to Eqs.~\eqref{eq:NecSufCondEtaEinNonTriv1} and
\eqref{eq:NecSufCondEtaEinNonTriv2},
we find that the functions $c_2$ and $w_0$ are constants and $c_1$ is given by 
\begin{equation}
  c_1 = - \sqrt{c_2} F'' \tan \sqrt{c_2} F. 
\end{equation}
This gives the explicit {\EtaEin} metrics under the assumption \eqref{eq:Assumption}
with a positive $c_2$.

A simple example is given by taking 
\begin{equation}
  w_0 = 0, \quad c_2 = 4, \quad F = \am(x,2) + \frac{\pi}{4}, 
\end{equation}
where $\am(x,2)$ denotes the Jacobi amplitude with parameter $2$.
In this setting, the metric function $f$ is given by
\begin{equation}
  f = \dn(x,2), 
\end{equation}
where $\dn(x,2)$ is the Jacobi delta amplitude. 
The metric is written as 
\begin{equation}
  \CompatMet = (\dd{w} + f \dd{y})^2 +
  \frac{\cosh 2w - f^2}{2} (\TwoDimFlatMet), 
\end{equation}
and the Ricci tensor takes the form of {\EtaEin} 
\begin{equation}
  \RicTens
  =
  - \frac{2\cosh 2w}{\cosh 2w - f^2} \CompatMet
  +
  \frac{2f^2}{\cosh 2w - f^2} \EtaForm\otimes \EtaForm. 
\end{equation}

\subsection{The case that $c_2 < 0$}
Substituting Eq.~\eqref{eq:SolutionOfJ} to Eqs.~\eqref{eq:NecSufCondEtaEinNonTriv1} and
\eqref{eq:NecSufCondEtaEinNonTriv2},
we find that the functions $c_2$ and $w_0$ are constants and $c_1$ is given by 
\begin{equation}
  c_1 = \sqrt{- c_2} F'' \coth \sqrt{-c_2} F. 
\end{equation}
This gives the explicit {\EtaEin} metrics under the assumption \eqref{eq:Assumption}
with a negative $c_2$.

\section{Derivation of Eq.~\eqref{eq:EneMomSingleString}}
\label{appendix:deriv}
We first review the general framework of a string energy-momentum tensor. 
Let $\zeta^A~(A=0,1)$ and $x^{\mu}~(\mu=0,\dots,3)$ be coordinates on a string worldsheet and a
spacetime manifold.
Let $x^{\mu} = X^{\mu}(\zeta^a)$ be the embedding of the string worldsheet.
Then the string action $S_{\text{string}}$ is given by
\begin{align}
  S_{\text{string}}[g,X] = -\mu \int \sqrt{-\gamma} \, \dd^2\zeta = \int \mathcal{L}_{\text{string}}
  \sqrt{-g} \, \dd^4 x, \\
  \mathcal{L}_{\text{string}} \coloneq - \frac{\mu}{\sqrt{-g}} \int \delta^4(x-X(\zeta))
  \sqrt{-\gamma} \, \dd^2 \zeta, 
\end{align}
where $\gamma$ is the determinant of the worldsheet metric
\begin{equation}
  \gamma_{AB} = g_{\mu\nu} \frac{\partial X^{\mu}}{\partial \zeta^A} \frac{\partial
    X^{\nu}}{\partial  \zeta^{B}}. 
\end{equation}
The energy-momentum tensor of the string is given by
\begin{equation}
  T^{\mu\nu}_{\text{string}}
  =
  \frac{2}{\sqrt{-g}} \fdv{S_{\text{string}}}{g_{\mu\nu}}
  =
  -\frac{\mu}{\sqrt{-g}} \int \delta^4(x-X(\zeta)) \sqrt{-\gamma} \, \gamma^{AB}
  \pdv{X^{\mu}}{\zeta^{A}} \pdv{X^{\nu}}{\zeta^{B}} \dd^2 \zeta. 
\end{equation}

We then apply the framework to the current setting.
For the $\xi$-string specified by $(\GNCI,\GNCII) = (\GNCI_0, \GNCII_0)$,
taking the following embedding:
\begin{equation}
  t(\zeta^A) = \zeta^0, \quad w(\zeta^A) = \zeta^1,
  \quad \GNCI(\zeta^A) =  \GNCI_0, \quad \GNCII(\zeta^A) = \GNCII_0, 
\end{equation}
we have Eq.~\eqref{eq:EneMomSingleString}.

\bibliography{library}
\end{document}